\title{\LARGE \bf  A transverse Hamiltonian  variational technique for open quantum stochastic systems and its application to coherent quantum control$^*$}
\author{Igor G. Vladimirov$^{\dagger}$
\thanks{$^*$This work is supported by the Australian Research Council.}
\thanks{$^\dagger$UNSW Canberra, Australia.
{\tt igor.g.vladimirov@gmail.com}.}
}
\DeclareMathAlphabet{\bit}{OML}{cmm}{b}{it}
\newtheorem{theorem}{Theorem}
\newtheorem{example}{Example}
\def\ad{\mathrm{ad}}           
\def\<{\leqslant}           
\def\>{\geqslant}           
\def\Re{\mathrm{Re}}   
\def\Im{\mathrm{Im}}   
\def\mR{\mathbb{R}}    
\def\mH{\mathbb{H}}    
\def\cH{\mathcal{H}}    
\def\Tr{\mathrm{Tr}}   
\def\rT{\mathrm{T}}    
\def\bE{\mathbf{E}}    
\def\bra{\langle}
\def\ket{\rangle}
\def\re{\mathrm{e}}    
\def\rd{\mathrm{d}}    
\def\d{\partial}    
\def\x{\times}
\def\ox{\otimes}
\def\op{\oplus}
\def\cZ{\mathcal{Z}}
\def\b1{\mathbf{1}}
\def\bD{\mathbf{D}}
\def\bH{\mathbf{H}}
\def\cF{\mathcal{F}}
\def\cW{\mathcal{W}}
\def\cD{\mathcal{D}}
\def\cG{\mathcal{G}}
\def\cI{\mathcal{I}}
\def\cT{\mathcal{T}}
\def\bG{\mathbf{G}}
\def\bL{\mathbf{L}}
\def\mS{\mathbb{S}}
\def\mA{\mathbb{A}}
\def\eps{\epsilon}
\def\Ups{\Upsilon}
\def\bOmega{\bit{\Omega}}
\def\bTheta{\bit{\Theta}}
\def\bJ{\mathbf{J}}
\def\sj{\mathsf{j}}
\def\fH{\mathfrak{H}}
\def\fF{\mathfrak{F}}
\def\cX{\mathcal{X}}
\begin{document}
\maketitle
\thispagestyle{empty}
\pagestyle{plain}

\begin{abstract}
This paper is concerned with variational methods for  nonlinear open quantum systems  with Markovian dynamics governed by Hudson-Parthasarathy quantum stochastic differential equations. The latter are driven by quantum Wiener processes of the external boson fields and are specified by the system Hamiltonian and system-field coupling operators. We consider the system response to perturbations of these energy operators and introduce a transverse Hamiltonian which encodes the propagation of the perturbations through the unitary system-field evolution. This provides a tool for the infinitesimal perturbation analysis and development of optimality conditions for coherent quantum control problems.  
We apply the transverse Hamiltonian  variational technique to a mean square optimal
coherent quantum filtering problem for a measurement-free  cascade connection of quantum systems.
\end{abstract}

\section{INTRODUCTION}\label{sec:intro}

The Hudson-Parthasarathy quantum stochastic calculus \cite{H_1991,HP_1984,P_1992} provides a unified framework for modelling the 
dynamics of open quantum systems interacting with 
environment. The governing quantum stochastic differential equations (QSDEs) are driven by a quantum counterpart of the classical Wiener process which acts on a boson Fock space and models a heat bath of external fields. An important feature of the QSDEs is that their drift and dispersion terms are specified by the Hamiltonian and coupling operators, which describe the energetics of the system and its interaction with the environment, and also involve a scattering matrix representing the photon exchange between the fields \cite{P_1992}.
These energy operators are usually functions (for example, polynomials or Weyl quantization integrals \cite{F_1989})
of the system variables, and the form of this dependence affects tractability of the resulting quantum system.

In particular, quadratic Hamiltonians and linear system-field coupling operators  lead to a class of open quantum harmonic oscillators (OQHOs) \cite{EB_2005,GZ_2004} which play an important role in the linear quantum control theory \cite{DP_2010,JNP_2008,P_2010}. The linear dynamics of such systems are relatively well understood and are similar to the classical linear SDEs in some regards, such as the preservation of  Gaussian nature of quantum states \cite{JK_1998,KRP_2010} driven by vacuum fields. Nevertheless, the coherent (that is, measurement-free) quantum analogue \cite{NJP_2009} of the classical  LQG control problem \cite{AM_1989,KS_1972} for OQHOs  is complicated by the physical realizability (PR) constraints on the state-space matrices of the fully quantum controller which come from the special structure of QSDEs mentioned above. Similar constraints are present in the coherent quantum filtering (CQF) problem \cite{MJ_2012}.

One of the existing approaches to the coherent quantum LQG (CQLQG) control and filtering problems employs the Frechet derivatives of the LQG cost (with respect to the state-space matrices subject to the PR constraints)
for obtaining the optimality conditions \cite{VP_2013a,VP_2013b} and for the numerical optimization \cite{SVP_2015}. This approach  takes into account the quantum nature of the underlying problem only through the PR constraints, being ``classical'' in all other respects, which has its own advantages from the viewpoint of well-developed optimization methods. However, a disadvantage of this approach lies in the fact that it is limited to certain parametric classes of linear controllers and filters. In particular, the resulting optimality   conditions do not provide information on whether nonlinear quantum controllers or filters can outperform the linear ones for linear quantum plants.

In the present paper, we  outline a fully quantum variational technique which allows the sensitivity of the dynamic and output variables of a nonlinear quantum stochastic system to be investigated with respect to arbitrary (that is, not only linear-quadratic) perturbations of the Hamiltonian and coupling operators. This approach is based on using a transverse Hamiltonian, an auxiliary time-varying operator which encodes the propagation of such perturbations through the unitary system-field evolution.
This leads to an infinitesimal perturbation formula for quantum averaged performance criteria (such as the LQG cost functional) which is applicable to the development of optimality conditions in coherent quantum control problems for larger classes of controllers and filters.
In particular, this approach allows the sensitivity of OQHOs with quadratic performance criteria to be studied with respect to general perturbations of the energy operators. In fact, the transverse Hamiltonian   method has already been employed in \cite{V_2015b} to establish local sufficiency of linear filters in the mean square optimal CQF problem \cite{VP_2013b} for linear plants with respect to varying the energy operators of the filter along Weyl operators \cite{F_1989}.
Note that our approach is different from \cite{GBS_2005} which develops a quantum Hamilton-Jacobi-Bellman principle  for a measurement-based quantum feedback control problem, where the role of state variables is played by the density operator. It is also relevant to mention a parallel between the presented perturbation analysis and the fluctuation-dissipation theorem \cite{LL_1980}.

The paper is organised as follows. Section~\ref{sec:not} outlines notation used in the paper. Section~\ref{sec:QSDE} specifies the class of open quantum systems in Hamiltonian formulation.  Section~\ref{sec:OQHO} discusses sensitivity of OQHOs to parametric perturbations   within quadratic Hamiltonians and linear system-field coupling operators. Section~\ref{sec:TH} introduces the transverse Hamiltonian. The latter is used in   Section~\ref{sec:sys} for perturbation analysis of system operators. Section~\ref{sec:func} extends the transverse  Hamiltonian technique  to quantum averaged performance criteria.  Section~\ref{sec:CQF} applies this approach to a mean square optimal CQF problem. Section~\ref{sec:conc} makes concluding remarks.

\section{PRINCIPAL NOTATION}\label{sec:not}

In what  follows, $[A,B]:= AB-BA$ denotes the commutator of linear operators $A$ and $B$ on a common space. As a linear superoperator, the  commutator with a fixed operator $A$, is denoted by $\ad_A(\cdot):= [A,\cdot]$.
This extends to the commutator $(n\x m)$-matrix  $
    [X,Y^{\rT}]
    :=
    XY^{\rT} - (YX^{\rT})^{\rT} = ([X_j,Y_k])_{1\< j\< n,1\< k\< m}
$ for a vector $X$ of operators $X_1, \ldots, X_n$ and a vector $Y$ of operators $Y_1, \ldots, Y_m$.  Vectors are organized as columns unless indicated otherwise,  and the transpose $(\cdot)^{\rT}$ acts on matrices of operators as if their entries were scalars.
In application to such matrices, $(\cdot)^{\dagger}:= ((\cdot)^{\#})^{\rT}$ denotes the transpose of the entry-wise operator adjoint $(\cdot)^{\#}$. For complex matrices,  $(\cdot)^{\dagger}$ is the usual complex conjugate transpose  $(\cdot)^*:= (\overline{(\cdot)})^{\rT}$.
The subspaces of real symmetric, real antisymmetric and complex Hermitian matrices of order $n$ are denoted by $\mS_n$, $\mA_n$
 and
$
    \mH_n
    :=
    \mS_n + i \mA_n
$, respectively,  where $i:= \sqrt{-1}$ is the imaginary unit. The real and imaginary parts of a complex matrix are denoted by $\Re(\cdot)$ and $\Im(\cdot)$. These extend to matrices $M$ with operator-valued entries as $\Re M = \frac{1}{2}(M+M^{\#})$ and $\Im M = \frac{1}{2i}(M-M^{\#})$ which consist of self-adjoint operators.
Positive (semi-) definiteness of matrices and the corresponding partial ordering  are denoted by ($\succcurlyeq$) $\succ$.  Also, $\mS_n^+$ and $\mH_n^+$ denote the sets of positive semi-definite real symmetric and complex Hermitian matrices of order $n$, respectively.
The tensor product of spaces or operators (in particular, the Kronecker product of matrices) is denoted by $\ox$. The tensor product $A\ox B$ of operators $A$ and $B$ acting on different spaces will sometimes be abbreviated as $AB$.
The identity matrix of order $n$ is denoted by $I_n$, while the identity operator on a space $H$ is denoted by $\cI_H$.
The Frobenius inner product of real or complex matrices  is denoted by
$
    \bra M,N\ket
    :=
    \Tr(M^*N)
$.
The expectation $\bE \xi := \Tr(\rho \xi)$  of a quantum variable $\xi$ over a density operator $\rho$ extends entrywise to vectors and matrices of such variables. 


\section{OPEN QUANTUM STOCHASTIC SYSTEMS BEING CONSIDERED}\label{sec:QSDE}

Consider an open quantum system which interacts with a heat bath of $m$ external boson fields and is endowed with $n$ dynamic variables $X_1(t), \ldots, X_n(t)$ evolving in time $t\> 0$. The system variables are self-adjoint operators on a composite system-field Hilbert space $\cH\ox \cF$, where $\cH$ is the initial space of the system which provides a domain for $X_1(0), \ldots, X_n(0)$, and $\cF$ is a boson Fock space \cite{P_1992} for the action of quantum Wiener processes $W_1(t), \ldots, W_m(t)$. The latter are self-adjoint operators which model the external boson fields. The energetics of the system-field interaction is specified by the system Hamiltonian $H(t)$ and the system-field coupling operators $L_1(t), \ldots, L_m(t)$ which are self-adjoint operators,  representable as time invariant functions (for example, polynomials with constant coefficients) of the system variables $X_1(t), \ldots, X_n(t)$. Both $H(0)$ and $L_1(0), \ldots, L_m(0)$ act on the initial space $\cH$. For what follows, the system and field variables, and the coupling operators are assembled into vectors:
\begin{equation}
\label{XWL}
    X(t):=
    {\begin{bmatrix}
        X_1(t)\\
        \vdots\\
        X_n(t)
    \end{bmatrix}},
    \qquad
    W(t):=
    {\begin{bmatrix}
        W_1(t)\\
        \vdots\\
        W_m(t)
    \end{bmatrix}},
    \qquad
    L(t):=
    {\begin{bmatrix}
        L_1(t)\\
        \vdots\\
        L_m(t)
    \end{bmatrix}}.
\end{equation}
Depending on the context, a system operator $\sigma$, acting on the initial space $\cH$, will sometimes be identified with its extension $\sigma\ox \cI_{\cF}$ to the composite system-field space $\cH\ox \cF$.
The system and the fields form an isolated entity which undergoes a unitary evolution $U(t)$ on $\cH\ox \cF$ driven by the system-field interaction according to the following QSDE \cite{HP_1984,P_1992} with initial condition $U_0:=\cI_{\cH\ox \cF}$:
\begin{equation}
\label{dU}
    \rd U(t) = -\Big(i(H_0\rd t + L_0^{\rT} \rd W(t)) + \frac{1}{2}L_0^{\rT}\Omega L_0\rd t\Big)U(t),
\end{equation}
which corresponds to the Schr\"{o}dinger picture of quantum dynamics.
 Here and in what follows, the initial values of time-varying operators (or vectors or matrices of operators) at $t=0$ are marked by subscript $0$ (so that $H_0:= H(0)$, $L_0:= L(0)$ and $U_0:= U(0)$), while the time dependence will often be omitted for brevity. The QSDE (\ref{dU}) corresponds to a particular yet important case of the identity scattering matrix, when the photon exchange between the fields and the gauge processes \cite{P_1992} can be eliminated from consideration.   Here,
$\Omega:= (\omega_{jk})_{1\<j,k\< m}\in \mH_m^+$ is the quantum Ito matrix of the Wiener process $W$:
\begin{equation}
\label{Omega}
    \rd W\rd W^{\rT} = \Omega \rd t,
    \qquad
    \Omega:= I_m + iJ,
\end{equation}
where the dimension $m$  is assumed to be even, and the matrix $J\in \mA_m$ specifies the cross-commutations between the quantum Wiener processes $W_1, \ldots, W_m$:
\begin{equation}
\label{J}
    [\rd W, \rd W^{\rT}]
    =
    2iJ\rd t,
    \qquad
    J := \bJ\ox I_{m/2},
    \qquad
    \bJ
    :=
    {\begin{bmatrix}
        0& 1\\
        -1 & 0
    \end{bmatrix}}.
\end{equation}
Accordingly, the term $L_0^{\rT}\rd W$ in (\ref{dU})  can be interpreted as an incremental Hamiltonian of the system-field interaction, while   $\frac{1}{2}L_0^{\rT}\Omega L_0\rd t$ is associated with the  Ito correction term in the differential relation $\rd (UU^{\dagger})=  (\rd U)U^{\dagger}+U\rd U^{\dagger} + (\rd U)\rd U^{\dagger}   = 0$ describing the preservation of co-isometry of $U(t)$.
The system variables at time $t\>0$, as operators on the system-field space $\cH\ox \cF$, are the images
\begin{equation}
\label{XU}
  X_k(t) = \sj_t(X_k(0))
\end{equation}
of their initial values
under the flow $\sj_t$ which maps a system operator $\sigma_0$ on the initial space $\cH$ to
\begin{equation}
\label{flow}
    \sigma(t)
    :=
    \sj_t(\sigma_0)
    =
    U(t)^{\dagger}(\sigma_0\ox \cI_{\cF})U(t).
\end{equation}
Note that, in view of (\ref{dU}), the flow $\sj_t$ depends on the energy operators $H_0$ and $L_0$.
The resulting quantum adapted  process $\sigma$ satisfies the following Hudson-Parthasarathy QSDE \cite{HP_1984,P_1992}:
\begin{equation}
\label{dsigma}
    \rd \sigma = \cG(\sigma) \rd t - i[\sigma,L^{\rT}]\rd W,
    \qquad
    \cG(\sigma) := i[H,\sigma] + \cD(\sigma),
\end{equation}
where use is made of the Hamiltonian and the coupling operators evolved by the flow $\sj_t$ from (\ref{flow}) as
\begin{equation}
\label{HL}
    H(t) = \sj_t(H_0),
    \qquad
    L(t) = \sj_t(L_0)
\end{equation}
(the flow  acts entrywise on vectors and matrices of operators).
Also, $\cD$ in (\ref{dsigma}) denotes the Gorini-Kossakowski-Sudar\-shan-Lin\-d\-blad (GKSL) decoherence superoperator \cite{GKS_1976,L_1976} which acts on $\sigma(t)$ as
\begin{align}
\nonumber
    \cD(\sigma)
    & :=
    \frac{1}{2}
    \sum_{j,k=1}^m
    \omega_{jk}
    ( L_j[\sigma,L_k] + [L_j,\sigma]L_k )\\
\nonumber
    & =
    \frac{1}{2}
    \big(
        L^{\rT}\Omega [\sigma,L]
        +
        [L^{\rT},\sigma]\Omega L
    \big)\\
\label{cD}
     & =
    -[\sigma, L^{\rT}]\Omega L - \frac{1}{2} [L^{\rT}\Omega L, \sigma].
\end{align}
The third equality in (\ref{cD}) is convenient for the entrywise application of the superoperator $\cD$ to vectors of operators.
The superoperator $\cG$ in (\ref{dsigma}) is referred to as the GKSL generator.
The identity $(\sigma_0\ox \cI_{\cF})U = U\sigma$, which holds for system operators in view of (\ref{flow}) and
the unitarity of $U(t)$, allows the QSDE (\ref{dU}) to be represented in the Heisenberg picture using (\ref{HL}):
\begin{equation}
\label{dUHL}
    \rd U = -U\Big(i(H\rd t + L^{\rT} \rd W) + \frac{1}{2}L^{\rT}\Omega L\rd t\Big).
\end{equation}
Here, use is also made of the property that for any $t\>0$, the future-pointing increments $\rd W(t)$ commute with adapted processes evaluated at the same (or an earlier) moment of time.
In application to
the vector $X(t)$ of system variables, the flow
\begin{equation}
\label{X}
    X(t):=\sj_t(X_0) = U(t)^{\dagger}(X_0\ox \cI_{\cF})U(t)
\end{equation}
 is understood (also entrywise as mentioned before) in accordance with (\ref{XU}) and (\ref{flow}), and
the corresponding QSDE (\ref{dsigma}) is representable in a vector-matrix form as
\begin{equation}
\label{dX}
    \rd X
    =
    F\rd t +G\rd W,
    \qquad\
    F:= \cG(X),
    \qquad\
    G := -i[X, L^{\rT}],
\end{equation}
where the $n$-dimensional drift vector $F$  and the dispersion $(n\x m)$-matrix $G$ consist of time-varying operators.
Furthermore, the interaction of the input field with the system produces an $m$-dimensional  output field
\begin{equation}
\label{Y}
    Y(t):=
    {\begin{bmatrix}
        Y_1(t)\\
        \vdots\\
        Y_m(t)
    \end{bmatrix}}
    =
    U(t)^{\dagger}(\cI_{\cH}\ox W(t))U(t),
\end{equation}
where the system-field unitary evolution is applied to the current input field variables (which reflects the innovation nature of the quantum Wiener process and the continuous tensor product structure of the Fock space \cite{PS_1972}).  The output field satisfies
the QSDE
\begin{equation}
\label{dY}
    \rd Y = 2JL \rd t + \rd W,
\end{equation}
where the matrix $J$ is given by (\ref{J}), and $L$ is the vector of system-field coupling operators from (\ref{XWL}).
The common unitary evolution in (\ref{X}) and (\ref{Y}) preserves the commutativity between the system variables and the output field variables in time $t\> 0$:
\begin{equation}
\label{XY}
        [X(t),Y(t)^{\rT}] = U(t)^{\dagger}[X_0 \ox \cI_{\cF},\cI_{\cH_0}\ox W(t)^{\rT}]U(t)
     =
    0,
\end{equation}
where the entries of $X_0$ and $W(t)$ commute as operators on different spaces.
By a similar reasoning, the output field $Y$ inherits the CCR matrix $J$ from the input quantum Wiener process $W$:
\begin{align*}
    [Y(t), Y(t)^{\rT}]
    & =
    U(t)^{\dagger}[\cI_{\cH}\ox W(t), \cI_{\cH}\ox W(t)^{\rT}] U(t)\\
    & =
    U(t)^{\dagger}(\cI_{\cH} \ox[W(t), W(t)^{\rT}]) U(t)\\
     & =
    2it J  U(t)^{\dagger}\cI_{\cH\ox \cF} U(t)    = 2itJ.
\end{align*}
This relation can also be established  as a corollary of the property that $\rd Y \rd Y^{\rT} = \rd W\rd W^{\rT} = \Omega \rd t$ in view of (\ref{Omega}), (\ref{J}) and (\ref{dY}).

\section{OPEN QUANTUM HARMONIC OSCILLATOR WITH PARAMETRIC DEPENDENCE}\label{sec:OQHO}

In a particular case, when the system being considered is an OQHO \cite{EB_2005}, its dynamic variables satisfy the Heisenberg CCRs
\begin{equation}
\label{Theta}
    [X(t),X(t)^{\rT}]
    =
    2i
    \Theta
\end{equation}
on a dense domain,
where the CCR matrix $\Theta \in \mA_n$ represents $\Theta \ox \cI_{\cH\ox \cF}$ and remains unchanged. The Hamiltonian of the OQHO is a quadratic polynomial and  the system-field coupling operators in (\ref{XWL}) are linear functions of the system variables:
\begin{equation}
\label{RN}
    H = \frac{1}{2}X^{\rT} R X,
    \qquad
    L
    =
    N X,
\end{equation}
where $R\in \mS_n$ and $N \in \mR^{m\x n}$ are constant matrices. In this case,  (\ref{dX}) and (\ref{dY}) become linear QSDEs
\begin{equation}
\label{dXdY}
    \rd X = A X\rd t + B\rd W,
    \qquad
    \rd Y = C X\rd t + \rd W
\end{equation}
(with linear drift $F=AX$ and dispersion matrix $G=B$ with real-valued entries)
with     constant coefficients which comprise the matrices $A\in \mR^{n\x n}$, $B\in \mR^{n\x m }$ and $C\in \mR^{m\x n}$:
\begin{equation}
\label{ABC}
    A:= 2\Theta (R +N^{\rT}JN),
    \qquad
    B:= 2\Theta N^{\rT},
    \qquad
    C := 2JN.
\end{equation}
These matrices satisfy the following algebraic equations
\begin{equation}
\label{PR}
    A\Theta + \Theta A^{\rT} + BJB^{\rT} = 0,
    \qquad
    \Theta C^{\rT} + BJ = 0
\end{equation}
which describe the fulfillment of the CCRs (\ref{Theta}) and the commutativity (\ref{XY}) at any moment of time and provide necessary and sufficient conditions of physical realizability   \cite{JNP_2008} of the linear  QSDEs (\ref{dXdY}) as an OQHO with the CCR matrix $\Theta$. The first equality  in (\ref{PR}) can be regarded as an algebraic Lyapunov equation (ALE) with respect to $\Theta$ which has a unique solution if and only if the Kronecker sum $A\op A:= I_n \ox A + A\ox I_n$ is a nonsingular matrix (that is, $A$ has no two eigenvalues which are symmetric to each other about the origin of the complex plane).

For general open quantum systems (whose dynamic variables satisfy CCRs, while the Hamiltonian and the coupling operators are not necessarily quadratic and linear functions of the system variables, respectively), the CCR preservation follows from the unitary evolution of the system variables in (\ref{X}).

Using the standard notation  of classical linear control theory \cite{AM_1989,KS_1972},  the quadruple of the state-space realization matrices of the QSDEs (\ref{dXdY}), which specify the input-system-output  map from $W$ to $Y$ for the OQHO, can be written as
\begin{equation}
\label{ABCI}
    {\left[
    \begin{array}{c||c}
        A & B\\
        \hline
        \hline
        C & I_m
    \end{array}
    \right]}.
\end{equation}
If the matrices $R$ and $N$, which specify the Hamiltonian and the coupling operators  in (\ref{RN}), depend smoothly on a scalar parameter $\eps$ (while $W$ is independent of $\eps$),  then so do the matrices $A$, $B$, $C$ in (\ref{ABC}) and the system and output variables which comprise the vectors $X(t)$ and $Y(t)$. The derivatives $X(t)':= \d_{\eps} X(t)$ and $Y(t)':= \d_{\eps} Y(t)$  can be interpreted for any $t\> 0$ as adapted processes with self-adjoint operator-valued entries satisfying the QSDEs
\begin{align}
\label{dX'dY'}
    \rd X' & = (A' X + A X') \rd t + B'\rd W,
    \qquad
    \rd Y' = (C' X + CX')\rd t
\end{align}
(the second of which is, in fact, an ODE)
with zero initial conditions $X_0' = 0$ and $Y_0'=0$ since $X_0$ and $Y_0$ do not depend on  $\eps$. Here,
\begin{equation}
\label{A'B'C'}
    A'  = 2\Theta (R' +N'^{\rT}JN + N^{\rT}JN'),
    \qquad
    B' = 2\Theta N'^{\rT},
    \qquad
    C'  = 2JN'
\end{equation}
are the derivatives  of the matrices $A$, $B$, $C$ from (\ref{ABC}) with respect to $\eps$. In combination with (\ref{dXdY}), the QSDEs (\ref{dX'dY'}) allow the input-system-output map from $W$ to ${\small\begin{bmatrix}Y\\ Y'\end{bmatrix}}$ to be  represented  in a form, similar to  (\ref{ABCI}), as
\begin{equation}
\label{AAA}
    {\left[
    \begin{array}{cc||c}
        A & 0 & B\\
        A' & A & B'\\
        \hline
        \hline
        C & 0 & I_m\\
        C' & C & 0
    \end{array}
    \right]}.
\end{equation}
The special block lower triangular structure of the state dynamics matrix 
in (\ref{AAA}) is closely related to the Gateaux derivative
$$
    (\re^{A})' =
    {\begin{bmatrix}
        0 & I_n
    \end{bmatrix}}
    \exp
    \left(
        {\begin{bmatrix}
        A & 0\\
        A' & A\\
        \end{bmatrix}}
    \right)
    {\begin{bmatrix}
            I_n \\
            0
    \end{bmatrix}}
$$
of the matrix exponential in the direction $A'$; see, for example, \cite{H_2008}. At any time $t\>0$,  both $X(t)'$ and $Y(t)'$ depend linearly on the matrices $R'$ and $N'$ from (\ref{A'B'C'}) in view of the following representation of the solutions of the QSDEs (\ref{dX'dY'}) which regards $A' X\rd t + B'\rd W$ as a forcing term:
\begin{align}
\label{X'}
    X' & = \int_0^t \re^{(t-s)A} (A'X(s)\rd s + B' \rd W(s)),\\
\label{Y'}
    Y' & = \int_0^t (C'X(s) + C X(s)' )\rd s,
\end{align}
where
\begin{equation}
\label{Xlin}
    X(t) = \re^{tA}X_0 + \int_0^t \re^{(t-s)A} B\rd W(s)
\end{equation}
is the unperturbed solution of the first QSDE in (\ref{dXdY}) which does not depend on $R'$ or $N'$. In the case of linear QSDEs, whose coefficients depend smoothly on parameters, the mean square differentiability of their solutions with respect to the parameters can be verified directly by using the closed form (\ref{Xlin}) 
(under certain integrability conditions for the underlying quantum state).

The relations (\ref{A'B'C'})--(\ref{Y'}) exhaust the analysis of sensitivity of the OQHO to the perturbations of the matrices $R$ and $N$ under which the energy operators $H$ and $L$ (\ref{RN}) remain in the corresponding classes of quadratic and linear functions of the system variables. However, this analysis is not applicable to more general perturbations (for example, higher order polynomials of the system variables) and is restricted to linear systems.


\section{TRANSVERSE HAMILTONIAN}\label{sec:TH}

We will now consider the response of the general open quantum system, described in Section~\ref{sec:QSDE} and governed by (\ref{dX}) and (\ref{dY}), to perturbation of the system Hamiltonian and the system-field coupling operators:
\begin{equation}
\label{KM}
    H_0\mapsto H_0 + \eps K_0,
    \qquad
    L_0\mapsto L_0 + \eps M_0.
\end{equation}
Here,  $K_0$ and the entries of the $m$-dimensional vector $M_0$ are self-adjoint operators on the initial system space $\cH$, and $\eps$ is a small real-valued  parameter, so that $K_0=H_0'$ and $M_0=L_0'$.  In what follows, the derivative $(\cdot)':= \d_{\eps}(\cdot)$ will be taken at $\eps=0$.  The   perturbations $K_0$ and $M_0$ in (\ref{KM}) are also assumed to be functions of the system variables which are evolved by the \emph{unperturbed} flow (\ref{flow}) as
\begin{equation}
\label{KMflow}
    K(t):= \sj_t(K_0),
    \qquad
    M(t):= \sj_t(M_0).
\end{equation}
For example, in the case of the OQHO considered in Section~\ref{sec:OQHO}, the perturbations $K$ and $M$ inherit the structure of the Hamiltonian and the coupling operators in (\ref{RN}) as a quadratic function and a linear function of the system variables, respectively:
\begin{equation}
\label{KMlin}
    K = \frac{1}{2}X^{\rT} R' X,
    \qquad
    M
    =
    N' X.
\end{equation}
Returning to the general case, we will avoid at this stage technical assumptions on $K_0$ and $M_0$ in (\ref{KM}), so that the calculations, carried out below for arbitrary perturbations, should be regarded as formal. Since the energy operators $H_0$ and $L_0$ completely specify the dynamics of the unitary operator $U(t)$ which determines the evolution of the system and output field variables, the response of the latter to the perturbation of $H_0$ and $L_0$ reduces to that of $U(t)$. Therefore, the propagation of the initial perturbations of the energy operators through the subsequent unitary system-field evolution can be described in terms of the operator
\begin{equation}
\label{V}
    V(t):= U(t)',
\end{equation}
which satisfies $V_0 = 0$ since $U_0=\cI_{\cH\ox \cF}$ does not depend on $\eps$. The smoothness of dependence of $U(t)$ on the parameter $\eps$ is analogous to the corresponding property of solutions of classical SDEs (whose drift and dispersion satisfy suitable regularity conditions \cite{KS_1991,S_2008}) and holds at least in the linear case discussed in Section~\ref{sec:OQHO}.

\begin{theorem}
\label{th:Q}
For any time $t\> 0$, the operator $V(t)$ in (\ref{V}), associated with the unitary evolution $U(t)$ from (\ref{dU}), is representable as
\begin{equation}
\label{VUQ}
    V(t) = -iU(t)Q(t).
\end{equation}
Here, $Q(t)$ is a self-adjoint operator on $\cH\ox \cF$ which satisfies zero initial condition $Q_0 = 0$ and is governed by the QSDE
\begin{equation}
\label{dQ}
    \rd Q
    =
        \big(K -\Im (L^{\rT}\Omega M)\big)\rd t
        +
        M^{\rT}\rd W.
\end{equation}
Furthermore, $Q(t)$  depends linearly on the initial perturbations $K_0$ and $M_0$ of the energy operators in (\ref{KM}) through their unperturbed evolutions in (\ref{KMflow}). 
\end{theorem}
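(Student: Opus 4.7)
\emph{Proof plan.} The plan is to differentiate the QSDE~(\ref{dU}) in $\eps$ at $\eps=0$ to obtain a linear QSDE for $V=U'$ with $V_0=0$, define $Q$ via~(\ref{dQ}) with $Q_0=0$, and then verify that $-iUQ$ satisfies the same QSDE as $V$ with the same initial condition; uniqueness of solutions of Hudson--Parthasarathy QSDEs will then give $V=-iUQ$. Using the Schr\"odinger-picture form~(\ref{dU}) is natural here since the perturbations $K_0,M_0$ act on the initial space $\cH$ and are $\eps$-independent. Writing $\rd U_\eps=-\Xi_\eps U_\eps$ with
\begin{equation*}
\Xi_\eps:=i(H_0+\eps K_0)\rd t+i(L_0+\eps M_0)^\rT\rd W+\tfrac{1}{2}(L_0+\eps M_0)^\rT\Omega(L_0+\eps M_0)\rd t,
\end{equation*}
differentiation at $\eps=0$ yields $\rd V=-\Xi_0V-\Xi'U$ with $\Xi'=iK_0\rd t+iM_0^\rT\rd W+\tfrac{1}{2}(M_0^\rT\Omega L_0+L_0^\rT\Omega M_0)\rd t$.

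The core step is to substitute the ansatz $V=-iUQ$ into this equation. By the quantum Ito product rule $\rd(UQ)=(\rd U)Q+U\rd Q+(\rd U)(\rd Q)$, the term $(\rd U)Q=-\Xi_0UQ=i\Xi_0V$ already reproduces the first summand on the right-hand side of the $V$-equation, leaving $U\rd Q+(\rd U)(\rd Q)=-i\Xi'U$ to be established. Writing $\rd Q=A_Q\rd t+B_Q^\rT\rd W$, matching the $\rd W$-coefficients forces $UB_Q^\rT=M_0^\rT U$, whence $B_Q=U^\dagger M_0 U=M$ by~(\ref{KMflow}) and the $*$-homomorphism property of the flow $\sj_t$. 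With $B_Q$ so identified, the Ito table $\rd W\rd W^\rT=\Omega\rd t$ together with $UM=M_0U$ collapses the cross-term to $(\rd U)(\rd Q)=-iL_0^\rT\Omega M_0\,U\rd t$, and matching the $\rd t$-coefficients gives $UA_Q=K_0U+\tfrac{i}{2}(L_0^\rT\Omega M_0-M_0^\rT\Omega L_0)U$. This becomes $A_Q=K-\Im(L^\rT\Omega M)$ after using the Hermiticity of $\Omega$ to obtain $(L_0^\rT\Omega M_0)^\#=M_0^\rT\Omega L_0$, the definition $\Im X=(X-X^\#)/(2i)$, and conjugation by $U^\dagger$ with the homomorphism property to convert $K_0\mapsto K$ and $L_0^\rT\Omega M_0\mapsto L^\rT\Omega M$.

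Finally, self-adjointness of $Q(t)$ follows because the drift $K-\Im(L^\rT\Omega M)$ is self-adjoint by construction, while each component of the dispersion $M$ is self-adjoint and commutes with the future Wiener increments, rendering the Ito integral $\int_0^tM^\rT\rd W$ self-adjoint; combined with $Q_0=0$ this yields $Q=Q^\dagger$. Linearity of $Q(t)$ in $K_0,M_0$ through~(\ref{KMflow}) is immediate from the integral form $Q(t)=\int_0^t(K(s)-\Im(L(s)^\rT\Omega M(s)))\rd s+\int_0^tM(s)^\rT\rd W(s)$, since $K(s)=\sj_s(K_0)$ and $M(s)=\sj_s(M_0)$ depend linearly on $K_0$ and $M_0$ under the fixed unperturbed flow. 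The main technical hurdle I anticipate is the $\rd t$-coefficient matching: the non-commutativity of $U$ with $L_0,M_0$ and the extraction of the combination $L_0^\rT\Omega M_0-M_0^\rT\Omega L_0=2i\Im(L_0^\rT\Omega M_0)$ require careful bookkeeping, but once this algebraic identity is secured the remaining steps are routine.
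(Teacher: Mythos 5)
Your argument is correct at the formal level the paper works at, but it runs in the opposite direction to the paper's proof. The paper defines $Q(t):=iU(t)^{\dagger}V(t)$ outright: differentiating the unitarity relation $U^{\dagger}U=\cI_{\cH\ox\cF}$ in $\eps$ gives $(U^{\dagger}V)^{\dagger}+U^{\dagger}V=0$, so self-adjointness of $Q$ and the representation (\ref{VUQ}) come for free, and the QSDE (\ref{dQ}) is then derived by applying the quantum Ito product rule to $U^{\dagger}V$, using the differentiated Schr\"odinger-picture equation for $\rd V$ together with the Heisenberg-picture form (\ref{dUHL}) for $\rd U^{\dagger}$. You instead take (\ref{dQ}) (equivalently its integral form) as the definition of $Q$, substitute the ansatz $V=-iUQ$ into $\rd V=-\Xi_0V-\Xi'U$, and identify the drift and dispersion of $Q$ by coefficient matching; the algebra you carry out is sound (your identities $UB_Q^{\rT}=M_0^{\rT}U$ giving $B_Q=\sj_t(M_0)=M$, the collapse $(\rd U)(\rd Q)=-iL_0^{\rT}\Omega M_0\,U\rd t$, and $(L_0^{\rT}\Omega M_0)^{\#}=M_0^{\rT}\Omega L_0$ leading to $A_Q=K-\Im(L^{\rT}\Omega M)$ all check out and reproduce the paper's computation in conjugated form). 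What your route costs is one ingredient the paper never needs: to conclude $V=-iUQ$ you must invoke uniqueness of solutions of the linear QSDE for $V$ with $V_0=0$, an extra (unargued) assumption, whereas the paper's definition $Q:=iU^{\dagger}V$ makes (\ref{VUQ}) automatic; what it buys is a self-contained self-adjointness argument for $Q$ from the structure of (\ref{dQ}) (self-adjoint drift, self-adjoint dispersion entries commuting with future-pointing increments) instead of from unitarity, and a clean mechanical identification of the coefficients. One minor slip: since $V=-iUQ$ gives $UQ=iV$, you should have $(\rd U)Q=-\Xi_0UQ=-i\Xi_0V$ rather than $i\Xi_0V$; the relevant contribution $-i(\rd U)Q=-\Xi_0V$ is what reproduces the first summand, so your subsequent requirement $U\rd Q+(\rd U)(\rd Q)=-i\Xi'U$ and everything downstream are unaffected.
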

\begin{proof}
The differentiation of both sides of the unitarity relation $U(t)^{\dagger}U(t) = \cI_{\cH\ox \cF}$ with respect to the parameter $\eps$ and letting $\eps=0$  leads to the equalities
$
    V^{\dagger}U + U^{\dagger}V
    =
    (U^{\dagger}V)^{\dagger} + U^{\dagger}V
    =
    0
$,
the second of which implies self-adjointness of the operator
\begin{equation}
\label{Q}
    Q(t):= i U(t)^{\dagger}V(t),
\end{equation}
thus establishing (\ref{VUQ}).
Now, consider the time evolution of $Q(t)$. To this end, the differentiation of (\ref{dU}) with respect to $\eps$ yields
\begin{equation}
\label{dV}
    \rd V =
    -\big(i(K_0\rd t + M_0^{\rT} \rd W)+ \Re(L_0^{\rT}\Omega M_0)\rd t\big)U
    -\Big(i(H_0\rd t + L_0^{\rT} \rd W) + \frac{1}{2}L_0^{\rT}\Omega L_0\rd t\Big)V.
\end{equation}
By left multiplying both sides of (\ref{dV}) by $U^{\dagger}$ and recalling (\ref{Q}), it follows that
\begin{equation}
 \label{UdV}
    U^{\dagger}\rd V =
    -i(K\rd t + M^{\rT} \rd W)- \Re(L^{\rT}\Omega M)\rd t
+\Big(\frac{i}{2}L^{\rT}\Omega L\rd t-H\rd t - L^{\rT} \rd W\Big) Q,
\end{equation}
where use is made of the evolved perturbations (\ref{KMflow}).
By a similar reasoning, a combination of (\ref{dUHL}) with (\ref{Q}) leads to
\begin{align}
    (\rd U^{\dagger}) V
    & =
    \Big(i\big(H\rd t + L^{\rT} \rd W\big) - \frac{1}{2}L^{\rT}\Omega L\rd t\Big)U^{\dagger}V
     =
\Big(H\rd t + L^{\rT} \rd W + \frac{i}{2}L^{\rT}\Omega L\rd t\Big) Q,\\
\label{dUdV}
    \rd U^{\dagger} \rd V
    &=
    iL^{\rT} \rd W U^{\dagger}\rd V
    =
    -iL^{\rT} \rd W\rd W^{\rT} (iM+ LQ)
     =
    L^{\rT}\Omega (M - iLQ)\rd t,
\end{align}
where use is also made of (\ref{UdV}) and the quantum Ito product rules \cite{P_1992} in combination with (\ref{Omega}). It now follows from (\ref{Q}) and (\ref{UdV})--(\ref{dUdV}) that
\begin{align*}
    \rd Q & =
        i((\rd U^{\dagger}) V + U^{\dagger} \rd V + \rd U^{\dagger} \rd V)\\
        & =
        \big(
            K + iL^{\rT}\Omega M- i\Re(L^{\rT}\Omega M)
        \big)
        \rd t
        +
        M^{\rT}\rd W\\
     & =
        \big(
            K -\Im (L^{\rT}\Omega M)
        \big)
        \rd t
        +
        M^{\rT}\rd W,
\end{align*}
which establishes (\ref{dQ}). The linear dependence of $Q(t)$ on $K_0$ and $M_0$ follows from the integral representation
\begin{equation}
\label{Qint}
    Q(t)
    =
    \int_0^t
        \big(
            (K(s) -\Im (L(s)^{\rT}\Omega M(s)))\rd s
            +
            M(s)^{\rT}\rd W(s)
        \big)
\end{equation}
of the QSDE (\ref{dQ})
and the property that the evolved perturbations $K$ and $M$ in (\ref{KMflow}) depend linearly on $K_0$ and $M_0$, respectively.
\end{proof}

In view of Theorem~\ref{th:Q}, the equation $U(t)'=-iU(t)Q(t)$ has the structure of isolated quantum dynamics in fictitious time $\eps$, where $Q(t)$ plays the role of a Hamiltonian  associated with the perturbation of the unitary operator $U(t)$. In order to reflect this property, the time-varying operator $Q$ will be referred to as the \emph{transverse Hamiltonian} associated with the perturbations $K$ and $M$ of the energy operators.  Its computation is illustrated below.

\begin{example}
In the absence of perturbation to the system-field coupling, when $M_0$ consist of zero operators, the transverse Hamiltonian in  (\ref{Qint}) reduces to $Q(t) = \int_0^t K(s)\rd s$.  Moreover, if the system is isolated (and hence, the unitary process $U$ in (\ref{dU}) takes the form $U(t) = \re^{-itH_0}$),  then
$$
    Q(t)
    =
    \int_0^t \re^{isH_0}K_0 \re^{-isH_0}\rd s
    = \int_0^t \re^{is \ad_{H_0}}(K_0)\rd s
    =
    tE(it\ad_{H_0})(K_0).
$$
Here, use is made of Hadamard's lemma \cite{M_1998}, and $E(z):= \frac{\re^z-1}{z} = \sum_{k=0}^{+\infty} \frac{z^k}{(k+1)!}$ is an entire function which plays a part 
in the solution of systems of nonhomogeneous linear ODEs with constant coefficients \cite
{H_2008}. \hfill$\blacktriangle$
\end{example}

\begin{example}
For the OQHO of Section~\ref{sec:OQHO}, substitution of the perturbations (\ref{KMlin}) into (\ref{Qint}) leads to the transverse Hamiltonian
\begin{align}
\nonumber
    Q(t)
    =&
    \int_0^t
        \Big(
            \frac{1}{2}
            X(s)^{\rT}
            \big(
                R'
                +i
                (N^{\rT}\Omega N' - N'^{\rT}\Omega N)
            \big)
            X(s)\rd s
             +
            X(s)^{\rT}N'^{\rT}\rd W(s)
        \Big)\\
    \label{Qlin}
        =&
    \int_0^t
        \Big(
            \frac{1}{2}
            X(s)^{\rT}
            \big(
                R'
                +
                N'^{\rT}J N
                -N^{\rT}J N'
            \big)
            X(s)\rd s
             +
            X(s)^{\rT}N'^{\rT}\rd W(s)        \Big)
            -
            \bra
                N\Theta, N'
            \ket\, t
\end{align}
which depends in a quadratic fashion on the past history of the system variables. The latter are given by the unperturbed equation (\ref{Xlin}). The last term $\bra
                N\Theta, N'
            \ket\, t$ in (\ref{Qlin}) comes  from the relation $\Im (i (N^{\rT}\Omega N' - N'^{\rT}\Omega N)) = N^{\rT}N'-N'^{\rT}N$ (following  from (\ref{Omega})) and the identity $X^{\rT} \Ups X = i\bra \Ups, \Theta\ket$  which holds for any $\Ups \in \mA_n$ in view of the CCRs (\ref{Theta}).\hfill$\blacktriangle$
\end{example}

\section{PERTURBATION ANALYSIS OF SYSTEM OPERATORS}\label{sec:sys}

Since the transverse Hamiltonian $Q(t)$ encodes the propagation of the initial perturbations of the energy operators in (\ref{KM}) through the unitary system-field evolution over the time interval $[0,t]$, it provides a tool for the infinitesimal perturbation analysis of general system operators. The following theorem is concerned with an extended setting which, in addition to (\ref{KM}), allows for an initial perturbation in a system operator $\sigma_0$ specified by $\sigma_0'$.

\begin{theorem}
\label{th:dashQSDE}
For any self-adjoint system operator $\sigma_0$ on the initial space $\cH$, which smoothly depends on the same scalar parameter $\eps$ as in (\ref{KM}) and is evolved by the flow (\ref{flow}), the derivative of its evolved version $\sigma(t)$ with respect to $\eps$  is representable as
\begin{equation}
\label{flow'}
    \tau(t)
    :=
    \sigma(t)'
    =
    \sj_t(\sigma_0')
    +
    \phi(t),
    \qquad
    \phi(t)
    :=
    i[Q(t), \sigma(t)]
\end{equation}
at any time $t\> 0$,
where $Q(t)$ is   the transverse Hamiltonian  from Theorem~\ref{th:Q}. Here, the operator $\phi(t)$ satisfies the QSDE
\begin{equation}
\label{dphi}
    \rd \phi
    =
    (i[Q,\cG(\sigma)] + \chi(\sigma))\rd t
     + \big([Q,[\sigma,L^{\rT}]]-i[\sigma,M^{\rT}]\big)\rd W,
\end{equation}
with zero initial condition $\phi_0=0$, where $\cG$ is the GKSL generator from (\ref{dsigma}), and
$\chi$ is a linear superoperator given by
\begin{equation}
\label{chi}
    \chi(\sigma)
    :=
        i[K -\Im (L^{\rT}\Omega M),\sigma]
        -2\Re ([\sigma,L^{\rT}]\Omega M).
\end{equation}
\end{theorem}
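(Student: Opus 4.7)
The plan is to first establish the algebraic decomposition (\ref{flow'}) by direct differentiation of the flow (\ref{flow}) with respect to $\eps$, and then derive the QSDE (\ref{dphi}) for the transverse component $\phi = i[Q,\sigma]$ by applying the quantum Ito product rule together with the explicit form (\ref{dQ}) from Theorem~\ref{th:Q}.

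For the decomposition, I would differentiate $\sigma(t) = U(t)^{\dagger}(\sigma_0\ox \cI_{\cF})U(t)$ using the Leibniz rule, identify the middle term $U^{\dagger}\sigma_0' U$ as $\sj_t(\sigma_0')$, and substitute $V = -iUQ$ in the two outer terms. Self-adjointness of $Q$ then gives $V^{\dagger}\sigma_0 U + U^{\dagger}\sigma_0 V = iQ\sigma - i\sigma Q = i[Q,\sigma] = \phi$, yielding (\ref{flow'}); the initial condition $\phi_0 = 0$ is immediate from $Q_0 = 0$.

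To derive (\ref{dphi}), I would apply the quantum Ito product rule to expand $\rd\phi = i\,\rd[Q,\sigma] = i([\rd Q,\sigma] + [Q,\rd\sigma] + [\rd Q,\rd\sigma])$. Substitution of (\ref{dQ}) and (\ref{dsigma}), combined with the commutation of the future-pointing $\rd W(t)$ with adapted operators at time $t$, produces the ``visible'' contributions $i[K-\Im(L^{\rT}\Omega M),\sigma]\rd t + i[M^{\rT},\sigma]\rd W$ from $[\rd Q,\sigma]$, and $i[Q,\cG(\sigma)]\rd t + [Q,[\sigma,L^{\rT}]]\rd W$ from $[Q,\rd\sigma]$. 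The decisive step is the Ito correction $[\rd Q,\rd\sigma]$, where only the $\rd W$-parts survive via $\rd W\,\rd W^{\rT} = \Omega\,\rd t$. Writing $\alpha := M^{\rT}\Omega[\sigma,L^{\rT}]^{\rT}$ and $\beta := [\sigma,L^{\rT}]\Omega M$, and commuting $\rd W$ past adapted coefficients, one gets $\rd Q\,\rd\sigma = -i\alpha\,\rd t$ and $\rd\sigma\,\rd Q = -i\beta\,\rd t$. Since $\Omega$ is Hermitian and each $[\sigma,L_k]$ is anti-self-adjoint, a short check shows $\alpha^{\dagger} = -\beta$, hence $\alpha - \beta = -(\beta+\beta^{\dagger}) = -2\Re\beta$, so $i[\rd Q,\rd\sigma] = -2\Re([\sigma,L^{\rT}]\Omega M)\,\rd t$.

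Assembling the three pieces and using $i[M^{\rT},\sigma] = -i[\sigma,M^{\rT}]$ reproduces exactly the drift $i[Q,\cG(\sigma)] + \chi(\sigma)$ with $\chi$ as in (\ref{chi}), and the dispersion $[Q,[\sigma,L^{\rT}]] - i[\sigma,M^{\rT}]$ required by (\ref{dphi}). The main obstacle will be the bookkeeping in the Ito correction: one must carefully preserve operator ordering when applying $\rd W\,\rd W^{\rT} = \Omega\,\rd t$ to non-commuting coefficients, and recognise that the Hermitian structure of $\Omega$ together with the anti-self-adjointness of $[\sigma,L_k]$ converts the anti-symmetric combination $\alpha-\beta$ into a real-part expression---precisely what supplies the $-2\Re(\cdot)$ term in $\chi$ and complements the $-\Im(\cdot)$ term inherited from the drift of $\rd Q$.
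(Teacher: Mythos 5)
Your proposal is correct and follows essentially the same route as the paper: the decomposition (\ref{flow'}) via the Leibniz rule and $V=-iU Q$ with self-adjointness of $Q$, and the QSDE (\ref{dphi}) via the quantum Ito product rule applied to $i[Q,\sigma]$, with the Ito correction $[\rd Q,\rd\sigma]$ handled through the Hermitian structure of $\Omega$ and the skew-adjointness of the $[\sigma,L_k]$. Your direct computation $\alpha^{\dagger}=-\beta$, giving $i[\rd Q,\rd\sigma]=-2\Re([\sigma,L^{\rT}]\Omega M)\rd t$, is just an equivalent rephrasing of the paper's identity $[\alpha^{\rT}\rd W,\beta^{\rT}\rd W]=2i\Im(\alpha^{\rT}\Omega\beta)\rd t$ combined with $\Im(i[\sigma,L^{\rT}]\Omega M)=\Re([\sigma,L^{\rT}]\Omega M)$.
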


\begin{proof}
By using the Leibniz product rule together with (\ref{flow}), (\ref{V}), (\ref{VUQ}) and the self-adjointness of $Q(t)$, it follows that
\begin{align*}
    \sigma' & =
    U^{\dagger}(\sigma_0'\ox \cI_{\cF})U
    +
    V^{\dagger}(\sigma_0\ox \cI_{\cF})U + U^{\dagger}(\sigma_0\ox \cI_{\cF})V\\
     & = \sj_t(\sigma_0')
     +(-iUQ)^{\dagger}(\sigma_0\ox \cI_{\cF})U- iU^{\dagger}(\sigma_0\ox \cI_{\cF})UQ\\
     & =
     \sj_t(\sigma_0')
     +
    iQU^{\dagger}(\sigma_0\ox \cI_{\cF})U- i\sigma Q
      = \sj_t(\sigma_0') + i[Q, \sigma],
\end{align*}
which establishes (\ref{flow'}). The equality $\varphi_0=0$ follows from $Q_0=0$.
We will now obtain a QSDE for the process $\phi$. To this end, by combining the quantum Ito lemma \cite{HP_1984,P_1992} with the bilinearity of the commutator, and using the QSDEs (\ref{dsigma}) and (\ref{dQ}), it follows that
\begin{align}
\nonumber
    \rd [Q,\sigma]
    &=
    [\rd Q, \sigma] + [Q, \rd \sigma] + [\rd Q, \rd \sigma]\\
\nonumber
    &=
     [K -\Im (L^{\rT}\Omega M),\sigma]\rd t -[\sigma, M^{\rT}]\rd W+ [Q, \cG(\sigma)]\rd t- i[Q,[\sigma,L^{\rT}]]\rd W -[M^{\rT}\rd W, i[\sigma,L^{\rT}]\rd W]\\
\nonumber
    &=
    \big([K -\Im (L^{\rT}\Omega M),\sigma]
    + [Q, \cG(\sigma)] +2i \Im (i[\sigma,L^{\rT}]\Omega M)
          \big)\rd t - \big([\sigma,M^{\rT}] + i[Q,[\sigma,L^{\rT}]]\big)\rd W\\
\label{dQsig}
    &=
    \big(
            [K -\Im (L^{\rT}\Omega M),\sigma]
    + [Q, \cG(\sigma)] +2i \Re ([\sigma,L^{\rT}]\Omega M)
    \big)\rd t
     - \big([\sigma,M^{\rT}]+i[Q,[\sigma,L^{\rT}]]\big)\rd W.
\end{align}
Here, the quantum Ito product rules are combined with the commutativity between adapted processes and future-pointing increments of the quantum Wiener process. In the second and third of the equalities (\ref{dQsig}), use is also made of the relations $[\alpha,\beta^{\rT}\rd W] =[\alpha,\beta^{\rT}]\rd W$ and  $[\alpha^{\rT}\rd W, \beta^{\rT}\rd W] = 2i\Im(\alpha^{\rT} \Omega \beta)\rd t$ for appropriately dimensioned
adapted processes $\alpha$ and $\beta$ with self-adjoint operator-valued entries. These are combined with the identity $ \Im (i[\sigma,L^{\rT}]\Omega M) = \Re ([\sigma,L^{\rT}]\Omega M)$ in the fourth equality of (\ref{dQsig}). The QSDE (\ref{dphi}) can now be obtained my multiplying both sides of (\ref{dQsig}) by $i$ and using (\ref{chi}).
\end{proof}

The process $\sj_t(\sigma_0')$ in (\ref{flow'}) is the response of $\sigma(t)$ to the initial perturbation $\sigma_0'$ of the system operator under the unperturbed flow (\ref{flow}) and  satisfies the  QSDE (\ref{dsigma}):
\begin{equation}
\label{djsig}
    \rd \sj_t(\sigma_0') = \cG(\sj_t(\sigma_0')) \rd t - i[\sj_t(\sigma_0'),L^{\rT}]\rd W.
\end{equation}
The operator $\phi(t)$ in (\ref{flow'}) describes the response of the flow $\sj_t$ itself to the perturbation of the energy operators and will be referred to as the \emph{derivative  process} for the system operator $\sigma$. Accordingly, the term $i[Q,\cG(\sigma)]$ in the drift of the QSDE (\ref{dphi}) is the derivative process for $\cG(\sigma)$.
The last equality in (\ref{flow'}) is organized as the right-hand side of the Heisenberg ODE (in fictitious time $\eps$) of an isolated quantum system with the state space $\cH\ox \cF$ and the transverse Hamiltonian $Q(t)$. While the unperturbed flow $\sj_t(\sigma_0')$ depends linearly on $\sigma_0'$, the derivative process $\phi(t)$ depends linearly on the perturbations $K_0$ and $M_0$ of the energy operators through the transverse Hamiltonian $Q$ and the superoperator $\chi$ in (\ref{chi}).

In application to the vectors $X$ and $L$ of the system variables and system-field coupling operators in (\ref{XWL}), the QSDEs (\ref{dphi}), (\ref{djsig}) and the definition (\ref{chi}) lead to
\begin{align}
\nonumber
    \rd X'
    =&
    \big(
        i(
        [K -\Im (L^{\rT}\Omega M),X]
        + [Q, F]
        )
        +2\Im (G\Omega M)
    \big)\rd t\\
\label{X'QSDE}
     & + i([Q,G]-[X,M^{\rT}]\big)\rd W,\\
\nonumber
    \rd L'=&
    \big(
    \cG(M)+
        i(
        [K -\Im (L^{\rT}\Omega M),L]
        + [Q, \cG(L)]
        ) -2\Re ([L,L^{\rT}]\Omega M)
    \big)\rd t\\
\label{L'QSDE}
    & + \big([Q,[L,L^{\rT}]]-i[L,M^{\rT}] - i[M,L^{\rT}]\big)\rd W,
\end{align}
where $F$ and $G$ are the drift vector and the dispersion matrix from (\ref{dX}). In (\ref{X'QSDE}), use is also made  of the  absence $X_0' = 0$ of initial perturbations in the system variables, whereby  $\sj_t(X_0') = 0$ for any $t\>0$. Furthermore,
in (\ref{L'QSDE}), we have used the relations $\sj_t(L_0') = \sj_t(M_0) = M(t)$ in view of (\ref{KM}), (\ref{KMflow}). Since the matrix $J$ and the quantum Wiener process $W$ do not depend on the parameter $\eps$, the derivative of the output field $Y$ of the system  in (\ref{dY}) evolves according to the ODE
\begin{equation}
\label{Y'QSDE}
    (Y')^{^\centerdot} = 2JL',
\end{equation}
where $(\ )^{^\centerdot}$ denotes the time derivative, and $L'$ is governed by the QSDE (\ref{L'QSDE}).

In particular, for the OQHO of Section~\ref{sec:OQHO}
with  the perturbations (\ref{KMlin}) of the energy operators,
the QSDEs  (\ref{X'QSDE})--(\ref{Y'QSDE}) 
lead to the relations (\ref{dX'dY'}), (\ref{A'B'C'}) which were obtained in Section~\ref{sec:OQHO}  using more elementary techniques. However, the latter are limited to the class of linear-quadratic  perturbations (\ref{KMlin}) which inherit the structure of the corresponding energy operators, whereas the transverse Hamiltonian approach allows the system response to be investigated for general perturbations of the energy operators. Therefore, this approach can be used for the development of optimality conditions in coherent quantum control problems for larger classes of controllers and filters.

\section{SENSITIVITY OF QUANTUM AVERAGED PERFORMANCE FUNCTIONALS}\label{sec:func}

Following the classical control paradigm for stochastic systems \cite{AM_1989,KS_1972}, suppose the desired operation of the quantum system being considered over an infinite time horizon  corresponds to small values of an averaged cost functional
\begin{equation}
\label{cZ}
    \cZ
    :=
    \lim_{t\to +\infty}
        \bE Z(t),
\end{equation}
provided the limit exists. The latter assumption guarantees the same Cesaro limit $    \lim_{t\to +\infty}
    \big(
        \frac{1}{t}
        \int_0^t
        \bE Z(s)
        \rd s
    \big) = \cZ
$.
The expectation $\bE(\cdot)$ is taken over the tensor product $\rho := \varpi \ox \upsilon$ of the initial quantum state $\varpi$ of the system and the vacuum state $\upsilon$ in the boson Fock space $\cF$ for the external fields. Also, it is assumed that $Z$ (which will be referred to as the \emph{criterion process}) is an adapted quantum  process given by
\begin{equation}
\label{Z}
    Z(t) := f(X(t), L(t)).
\end{equation}
Here, $f: \mR^{n+m}  \to \mR$ is a function which is extended to the noncommutative system variables and the ``signal'' part $2JL$ of the  output field in (\ref{dY}) so as to make $Z(t)$ a self-adjoint operator for any $t\> 0$ (note that $\det J\ne 0$ in view of (\ref{J})). Such extension is straightforward in the case of polynomials and can be carried out through the Weyl quantization \cite{F_1989} for more general functions. In the CQLQG control and filtering problems \cite{NJP_2009,VP_2013a,VP_2013b}, the function $f$ in (\ref{Z}) is a positive semi-definite quadratic form.
Now, if the energy operators of the quantum system are subjected to perturbations (\ref{KM}), then by using the transverse Hamiltonian $Q$ from Theorem~\ref{th:Q} in combination with Theorem~\ref{th:dashQSDE}, it follows that
\begin{equation}
\label{phiZ}
    (\bE Z(t))' 
    = \bE(\sj_t(Z_0') + \phi(t)),
    \qquad
    \phi(t):= i[Q(t),Z(t)].
\end{equation}
Here, $Z_0'$ can be nonzero due to the dependence on $L$ in (\ref{Z}) despite $X_0'=0$, and $\phi$ is the derivative process for $Z$.
Ignoring, at this stage, the technical issues of existence and interchangeability of appropriate limits,
(\ref{phiZ}) leads to the following perturbation formula for the cost functional $\cZ$ in (\ref{cZ}):
\begin{equation}
\label{cZ'}
    \cZ'
    :=
    \lim_{t\to+\infty}
        (
        \bE\sj_t(Z_0')
        +
        \bE \phi(t)
        ).
\end{equation}
The right-hand side of (\ref{cZ'}) is a linear functional of the perturbations $K_0$ and $M_0$ which describes the corresponding (formal) Gateaux derivative  of $\cZ$. Therefore, the quantum system is a stationary point  of the performance criterion (\ref{cZ}) with respect to a subspace  $\cT$  of perturbations $(K_0, M_0)$ of the energy operators in (\ref{KM}), if $\cT$ is contained by the null space of the linear functional $\cZ'$ in (\ref{cZ'}):
\begin{equation}
\label{stat}
    \cT\subset \ker \cZ'.
\end{equation}
While $\lim_{t\to+\infty}\bE\sj_t(Z_0') $ in (\ref{cZ'}) reduces to averaging over the invariant state of the unperturbed system (provided the latter is ergodic and certain integrability conditions are satisfied), the computation of $\lim_{t\to+\infty} \bE \phi(t)$ is less straightforward.
Due to the product structure of the system-field state $\rho$ (with the external fields  being in the vacuum state),  the martingale part of the QSDE (\ref{dphi}) does not contribute to the time derivative
\begin{equation}
\label{Ephidot}
    (\bE \phi)^{^\centerdot} = \bE\psi  + \bE \chi(Z),
    \qquad
    \psi := i[Q,\cG(Z)],
\end{equation}
where $\psi$ is the derivative process for $\cG(Z)$, and the superoperator  $\chi$ from (\ref{chi}) is applied to the criterion process $Z$. The relation (\ref{Ephidot}) is a complicated integro-differential equation (IDE). However, this IDE admits an efficient solution, for example,  in the case when the system is an OQHO, and the function $f$ in (\ref{Z}) is a polynomial. In this case, due to the structure of the GKSL generator of the OQHO, $\cG(Z)$ is also a polynomial of the system variables of the same degree, thus leading to a linear relation (with constant coefficients) between the derivative processes $\phi$ and $\psi$, and algebraic closedness in the IDE (\ref{Ephidot}). Therefore, for OQHOs with a Hurwitz matrix $A$ and a polynomial criterion process $Z$, the computation  of $\cZ'$ (and verification of the stationarity condition (\ref{stat})) reduces to averaging over the unperturbed invariant state (which is unique and Gaussian).

\begin{example}\label{ex:OQHOquad}
Consider the OQHO of Section~\ref{sec:OQHO}, described by (\ref{Theta})--(\ref{ABC}) with a Hurwitz matrix $A$.
Suppose the criterion process $Z$ in (\ref{Z}) is a quadratic form:
\begin{equation}
\label{Zex}
    Z :=
    \frac{1}{2}
    {\begin{bmatrix}
        X^{\rT} & L^{\rT}
    \end{bmatrix}}
    \Pi
    {\begin{bmatrix}
        X\\
        L
    \end{bmatrix}}
    =
    \frac{1}{2}
    X^{\rT}
    P
    X,
    \qquad
\Pi:=
    {\begin{bmatrix}
        \Pi_{11} & \Pi_{12}\\
        \Pi_{21} & \Pi_{22}
    \end{bmatrix}},
\end{equation}
where the $\frac{1}{2}$ factor is introduced  for further convenience, $\Pi
 \in \mS_{n+m}^+$ is a given appropriately partitioned weighting matrix, and $P\in \mS_n^+$ is an auxiliary matrix:
 \begin{equation}
\label{P}
  P:= \Pi_{11} + \Pi_{12} N + N^{\rT} \Pi_{21} + N^{\rT} \Pi_{22} N.
\end{equation}
 Then, in view of $X_0'=0$, it follows from (\ref{Zex}) and (\ref{KMflow}) that
\begin{equation}
\label{Z0'ex}
    \sj_t(Z_0') = \Re ((\Pi_{21} X + \Pi_{22} L)^{\rT} M) = \Re (X^{\rT}(\Pi_{12}  + N^{\rT}\Pi_{22}) M).
\end{equation}
The second equality in (\ref{Zex}) allows the quantum average of the corresponding derivative process $\phi$ in (\ref{phiZ}) to be represented as
\begin{equation}
\label{XiUps}
    \bE \phi = \frac{1}{2}\bra P, \Ups\ket,
    \qquad
    \Ups:= i \bE [Q, \Xi],
    \qquad
    \Xi:= \Re(XX^{\rT}) = XX^{\rT} -i\Theta.
\end{equation}
Here, $\Ups$ is the expectation of the derivative process $i[Q,\Xi]$ for $\Xi$ and hence, satisfies the following  IDE,
similar to (\ref{Ephidot}):
\begin{equation}
\label{EUpsdot}
    \dot{\Ups} = i \bE [Q,\cG(\Xi)] + \bE \chi(\Xi),
\end{equation}
where the superoperator $\chi$ from (\ref{chi}) is applied to $\Xi$ entrywise as
\begin{align}
\nonumber
    \chi(X_jX_k)
    &=
    i[K -\Im (L^{\rT}\Omega M),X_jX_k]
        -
        2\Re ([X_jX_k,L^{\rT}]\Omega M)    \\
\label{chiXX}
    &=
    i[K -\Im (X^{\rT}N^{\rT}\Omega M),X_jX_k]
         +4\Im ((X_k \Theta_{j\bullet} + X_j \Theta_{k\bullet}) N^{\rT}\Omega M),
\end{align}
with $\Theta_{\ell \bullet}$ the $\ell$th row of the CCR matrix $\Theta$. By substituting the Hamiltonian and coupling operators of the OQHO from (\ref{RN}) into the GKSL generator $\cG$ in (\ref{dsigma}) and (\ref{cD}) and using the CCRs (\ref{Theta}) together with the state-space matrices (\ref{ABC}), it follows that 
\begin{equation}
\label{cGXX}
    \cG(\Xi)
     =
    A\Xi + \Xi A^{\rT} + BB^{\rT}.
\end{equation}
Substitution of (\ref{cGXX}) into (\ref{EUpsdot}) reduces the IDE to a Lyapunov ODE:
\begin{equation}
\label{UpsLODE}
    \dot{\Ups}
     = i \bE [Q,A\Xi + \Xi A^{\rT} + BB^{\rT}] + \bE \chi(\Xi)
    =
    A\Ups + \Ups A^{\rT}+ \bE \chi(\Xi).
\end{equation}
Since the matrix $A$ is Hurwitz, the unperturbed OQHO is ergodic and its invariant state is Gaussian with zero mean $\bE X = 0$ and covariance matrix $\bE(XX^{\rT}) = \Sigma + i\Theta$, where $\Sigma$ is a unique solution of the ALE
$
    A\Sigma + \Sigma A^{\rT} + BB^{\rT} = 0
$.
Therefore, under integrability conditions for $\chi(\Xi)$, to be discussed elsewhere, the solution  of the Lyapunov ODE (\ref{UpsLODE}) has a limit $\Ups_{\infty}:= \lim_{t\to +\infty} \Ups(t)$ which is a unique solution of the ALE
\begin{equation}
\label{UpsALE}
    A\Ups_{\infty} + \Ups_{\infty} A^{\rT}+ \lim_{t\to +\infty}\bE \chi(\Xi) = 0,
\end{equation}
where $\lim_{t\to +\infty}\bE \chi(\Xi)$ can be computed by averaging $\chi(\Xi)$ over the invariant Gaussian state of the OQHO. Therefore, by assembling (\ref{Z0'ex}) and (\ref{XiUps}) into (\ref{cZ'}), it follows that
\begin{equation}
\label{cZ'ex}
    \cZ'
    =
    \lim_{t\to +\infty}
    \bE
    \Re (X^{\rT}(\Pi_{12}  + N^{\rT}\Pi_{22}) M)
    +
    \frac{1}{2}\bra P,\Ups_{\infty}\ket,
\end{equation}
where the limit also reduces to averaging over the invariant Gaussian state. The relations (\ref{P}), (\ref{chiXX}), (\ref{UpsALE}) and (\ref{cZ'ex}) express the Gateaux derivative $\cZ'$ of the quadratic cost functional (\ref{cZ}), specified by (\ref{Zex}), in terms of mixed moments of the system variables $X$ and the perturbations $K$ and $M$ over the invariant Gaussian quantum state. In particular, if $K$ and $M$ are polynomials of the system variables, such moments can be computed using the Isserlis-Wick  theorem \cite{J_1997}. If $K$ and $M$ are trigonometric polynomials, that is, linear combinations of unitary Weyl operators $\re^{iu^{\rT} X}$, with $u\in \mR^n$, the moments can also be computed using quantum Price's theorem \cite{V_2014b}, which will be discussed elsewhere.
\hfill$\blacktriangle$

%
%

\end{example}

\section{MEAN SQUARE OPTIMAL COHERENT QUANTUM FILTERING PROBLEM}\label{sec:CQF}

We will now outline an application of the transverse Hamiltonian technique of Sections~\ref{sec:TH}--\ref{sec:func} to a quantum filtering problem.
Let the open quantum system, described in Section~\ref{sec:QSDE}, be interpreted as a quantum plant. Recall that it has
the Hamiltonian $H$, system-field coupling $L$, input field  $W$, dynamic variables $X$  and output field $Y$ governed by the QSDEs (\ref{dX}) and (\ref{dY}). Suppose the plant is cascaded with another open quantum system, further referred to as the filter,  which is driven by $Y$ and a $\mu$-dimensional quantum Wiener process $\omega$ on a boson Fock space $\fF$; see Fig.~\ref{fig:filtering}.
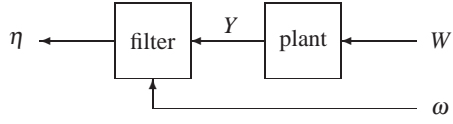
\begin{figure}[htbp]
\centering
\unitlength=1mm
\linethickness{0.2pt}
\begin{picture}(50.00,30.00)
    \put(10,20){\framebox(10,10)[cc]{{\small filter}}}
    \put(30,20){\framebox(10,10)[cc]{{\small plant}}}
    \put(15,16){\vector(0,1){4}}
    \put(15,16){\line(1,0){35}}
 \put(52,16){\makebox(0,0)[lc]{{\small$\omega$}}}
    \put(10,25){\vector(-1,0){10}}
    \put(50,25){\vector(-1,0){10}}
    \put(30,25){\vector(-1,0){10}}
    \put(-2,25){\makebox(0,0)[rc]{{\small$\eta$}}}
    \put(52,25){\makebox(0,0)[lc]{{\small$W$}}}
    \put(25.5,26){\makebox(0,0)[cb]{{\small$Y$}}}
\end{picture}\vskip-15mm
\caption{The series connection of a quantum plant and a quantum filter, mediated by the field $Y$ and affected by the environment through the quantum Wiener processes $W$, $\omega$.
The signal part of the filter output $\eta$ is to approximate the plant variables of interest in the mean square sense.
}
\label{fig:filtering}
\end{figure}
The filter is endowed with its own initial Hilbert space $\fH$, dynamic variables  $\xi_1(t), \ldots, \xi_{\nu}(t)$ with a CCR matrix $\Lambda \in \mA_{\nu}$, and an $(m+\mu)$-dimensional output field $\eta(t)$:
\begin{equation}
\label{xiomlam}
    \xi :=
    {\begin{bmatrix}
        \xi_1 \\
        \vdots\\
        \xi_{\nu}
    \end{bmatrix}},
    \qquad
    \omega :=
    {\begin{bmatrix}
        \omega_1 \\
        \vdots\\
        \omega_{\mu}
    \end{bmatrix}},
    \qquad
    \eta :=
    {\begin{bmatrix}
        \eta_1 \\
        \vdots\\
        \eta_{m+\mu}
    \end{bmatrix}}.
\end{equation}
We denote the filter Hamiltonian by $\Gamma$ and the vectors of operators of coupling of the filter with the plant output $Y$ and the quantum Wiener process $\omega$ by
\begin{equation}
\label{PhiPsi}
    \Phi :=
    {\begin{bmatrix}
        \Phi_1 \\
        \vdots\\
        \Phi_m
    \end{bmatrix}},
    \qquad
    \Psi :=
    {\begin{bmatrix}
        \Psi_1 \\
        \vdots\\
        \Psi_{\mu}
    \end{bmatrix}},
\end{equation}
respectively.
The Hamiltonian $\Gamma$ and the coupling operators $\Phi_1, \ldots, \Phi_m$ and $\Psi_1, \ldots, \Psi_{\mu}$ are functions of the dynamic variables $\xi_1, \ldots, \xi_{\nu}$ of the filter and hence, commute with the plant variables and functions thereof, including $H$ and $L$.
The plant and filter form a composite  open quantum stochastic system, whose vector $\cX$ of dynamic variables satisfies the CCRs
\begin{equation}
\label{cX}
  [\cX, \cX^{\rT}] = 2i\bTheta,
  \qquad
    \bTheta
    :=
    {\begin{bmatrix}
    \Theta & 0\\
    0 & \Lambda
    \end{bmatrix}},
    \qquad
    \cX
    :=
    {\begin{bmatrix}
        X\\
        \xi
    \end{bmatrix}}
\end{equation}
and is driven by an augmented quantum Wiener process $\cW$ with the Ito table
\begin{equation}
\label{cW}
    \rd \cW\rd \cW^{\rT}
    =
    \bOmega \rd t,
    \qquad
    \bOmega:=
    {\begin{bmatrix}\Omega & 0\\
    0 & \mho
    \end{bmatrix}},
    \qquad
    \cW
    :=
    {\begin{bmatrix}
        W\\
        \omega
    \end{bmatrix}}.
\end{equation}
Here, $\mho$ denotes the Ito matrix of the quantum Wiener process $\omega$  of the filter, which is defined similarly to $\Omega$ from (\ref{Omega}) and (\ref{J}) as
\begin{equation}
\label{Lambda}
    \rd \omega\rd \omega^{\rT} = \mho \rd t,
    \qquad
    \mho:= I_{\mu} + i\bJ\ox I_{\mu/2},
\end{equation}
where the dimension $\mu$ is also assumed to be even.
The quantum feedback  network formalism \cite{GJ_2009} allows the Hamiltonian $\bH$ of the plant-filter system and the vector $\bL$ of operators of coupling with $\cW$ to be computed as
\begin{align}
\label{bHbL}
    \bH = H + \Gamma + \Phi^{\rT} J L,
    \qquad
    \bL= {\begin{bmatrix}L + \Phi\\ \Psi\end{bmatrix}}.
\end{align}
%
%
%
%
%
%
Hence, the filter variables are governed by the QSDE
\begin{equation}
\label{dxi}
    \rd \xi
     =
    \bG(\xi)\rd t
    -
    i[\xi,\bL^{\rT}] \rd \cW
     =
    (i[\Gamma,\xi] + \Delta(\xi))\rd t
    -
    i\big[\xi, {\begin{bmatrix}\Phi^{\rT} & \Psi^{\rT}\end{bmatrix}}\big]\rd {\begin{bmatrix}Y\\ \omega\end{bmatrix}}.
\end{equation}
Here, $\bG(\zeta):= i[\bH, \zeta] + \bD(\zeta)$ is the GKSL generator of the plant-filter system, and
$
    \bD(\zeta)
    =
    -[\zeta, \bL^{\rT}]\bOmega \bL - \frac{1}{2} [\bL^{\rT}\bOmega \bL, \zeta]
$ is  the corresponding decoherence superoperator, similar to (\ref{cD}).
In (\ref{dxi}), use is also made of the partial GKSL decoherence superoperator $\Delta$  which acts on the filter variables as
$    \Delta(\xi)
    =
    -[\xi, \Phi^{\rT}]\Omega \Phi -[\xi, \Psi^{\rT}]\mho \Psi - \frac{1}{2} [\Phi^{\rT}\Omega \Phi + \Psi^{\rT}\mho\Psi, \xi]$  in view of (\ref{cX})--(\ref{bHbL}).
Now, consider a CQF problem which is formulated as the minimization of the mean square discrepancy
\begin{equation}
\label{CQF}
    \cZ
    :=
    \lim_{t\to +\infty} \bE Z(t) \longrightarrow \min,
    \qquad
    Z :=
    \frac{1}{2}
    (S X - T\bL)^{\rT}(S X - T\bL)
\end{equation}
between $r$ linear combinations of the plant variables of interest and filter output variables specified by given matrices $S\in \mR^{r\x n}$ and $T:= \begin{bmatrix}T_1 & T_2\end{bmatrix}$, with $T_1\in \mR^{r\x m}$, $T_2\in \mR^{r\x \mu}$ (note that, similarly to (\ref{dY}), $2\Im\bOmega \bL$ is the signal part of $\eta$  in view of (\ref{xiomlam}) and (\ref{cW})).
The criterion process $Z$ in (\ref{CQF}) is similar  to that in (\ref{Zex}):
\begin{equation}
\label{ZCQF}
    Z :=
    \frac{1}{2}
    {\begin{bmatrix}
        X^{\rT} & \bL^{\rT}
    \end{bmatrix}}
    \Pi
    {\begin{bmatrix}
        X\\
        \bL
    \end{bmatrix}},
    \qquad
    \Pi
    :=
    {\begin{bmatrix}
        S^{\rT} S & -S^{\rT} T \\
        -T^{\rT} S & T^{\rT}T
    \end{bmatrix}},
\end{equation}
with only the subvector $X$ of $\cX$ from (\ref{cX}) being present in (\ref{ZCQF}).  The minimization in (\ref{CQF})
is carried out over the filter Hamiltonian $\Gamma$ and the vector $\Phi$ of the filter-plant coupling operators in (\ref{PhiPsi}).
This problem extends \cite{VP_2013b} in that we do not restrict attention to linear filters even if the plant is an OQHO and $\Psi$ depends linearly on the filter variables $\xi$. If $\Gamma_0$ and $\Phi_0$ are perturbed in the directions
\begin{equation}
\label{KM0CQF}
    K_0:= \Gamma_0',
    \qquad
    M_0:= \Phi_0',
\end{equation}
consisting of self-adjoint operators representable as functions of the filter variables, the  corresponding perturbations of the plant-filter Hamiltonian and coupling operators in (\ref{bHbL}) are
\begin{equation}
\label{KMCQF}
    \bH_0' = K_0  + M_0^{\rT}JL_0,
    \qquad
    \bL_0'= {\begin{bmatrix}M_0\\ 0\end{bmatrix}}.
\end{equation}
By applying Theorem~\ref{th:Q}, and using (\ref{cW}), (\ref{KM0CQF}) and (\ref{KMCQF}),
the corresponding transverse Hamiltonian $Q$ for the plant-filter system satisfies the QSDE
\begin{align}
\nonumber
    \rd Q
    & =
        \Big(K  + M^{\rT}JL -\Im \Big(\bL^{\rT}\bOmega {\small\begin{bmatrix}M\\ 0\end{bmatrix}}\Big)\Big)\rd t
        +
        {\begin{bmatrix}M^{\rT}& 0\end{bmatrix}}\rd \cW\\
\label{dQCQF}
    & =
        \big(K   -\Im ((2L+\Phi)^{\rT}\Omega M)\big)\rd t
        +
        M^{\rT} \rd W,
\end{align}
where use is also made of $\Im(L^{\rT}\Omega M) 
= -M^{\rT}JL$ (following from (\ref{Omega}) and the commutativity $[L,M^{\rT}] = 0$). Then the Gateaux derivative $\cZ'$ of the cost functional in (\ref{CQF}) is given by (\ref{cZ'}), where in view of (\ref{ZCQF}) and (\ref{KMCQF}) (and similarly to (\ref{Z0'ex})),
$$
    \sj_t(Z_0') = \Re ((T\bL - SX)^{\rT} T_1M).
$$
In accordance with (\ref{Ephidot}), the expectation of the derivative process $\phi:= i[Q,Z]$ satisfies the IDE
$$
  (\bE \phi)^{^\centerdot} = i\bE[Q,\bG(Z)] + \bE \chi(Z),
$$
where, in view of (\ref{dQCQF}), the superoperator $\chi$ in (\ref{chi}) is given by
$$
    \chi(Z)
     =
        i[K   -\Im ((2L+\Phi)^{\rT}\Omega M),Z]
        -2\Re ([Z,(L+\Phi)^{\rT}]\Omega M).
$$
Therefore, in the case when both the plant and the unperturbed filter are OQHOs with Hurwitz dynamics matrices (while the perturbations in (\ref{KM0CQF}) are not  necessarily linear-quadratic), $\cZ'$ can be computed explicitly along the lines of Example~\ref{ex:OQHOquad} of Section~\ref{sec:func} due to $Z$ being a quadratic form of the plant-filter variables $\cX$.
The transverse Hamiltonian method, outlined above, has recently been used in \cite{V_2015b}  to show that, in the mean square optimal CQF problem for linear plants, linear filters can not be improved locally  (in the sense of first-order optimality conditions) by varying the energy operators of the filter along the Weyl operators.



\section{CONCLUSION}\label{sec:conc}

For a class of open quantum systems governed by Markovian  Hudson-Parthasarathy QSDEs, we have introduced a transverse Hamiltonian and derivative processes associated with perturbations of the energy operators. This provides a fully quantum tool for infinitesimal perturbation analysis of system operators and averaged cost functionals.  The proposed variational method can be used for the development of optimality  conditions in coherent quantum control problems. We have illustrated these ideas  for OQHOs with quadratic performance criteria and the mean square optimal CQF problem. 


\begin{thebibliography}{99}{
\bibitem{AM_1989}
B.D.O.Anderson, and J.B.Moore,
\emph{Optimal Control: Linear Quadratic Methods},
Prentice Hall, London, 1989.


\bibitem{DP_2010}
D.Dong, and I.R.Petersen, Quantum control theory and applications: a survey,
\emph{IET Contr. Theor. Appl.}, vol. 4, no. 12, 2010, pp. 2651--2671.
\bibitem{EB_2005}
S.C.Edwards, and V.P.Belavkin,
Optimal quantum filtering and
quantum feedback control,
arXiv:quant-ph/0506018v2, August 1,  2005.
\bibitem{F_1989}
G.B.Folland, \emph{Harmonic Analysis in Phase Space}, Princeton University Press, Princeton, 1989.



\bibitem{GZ_2004}
C.W.Gardiner, and P.Zoller,
\textit{Quantum Noise}, 3rd Ed.,
Springer, Berlin, 2004.
\bibitem{GKS_1976}
V.Gorini, A.Kossakowski, and E.C.G.Sudarshan,
Completely positive dynamical semigroups of $N$-level systems,
\emph{J. Math. Phys.}, vol. 17, no. 5, 1976, pp. 821--825.

\bibitem{GBS_2005}
J.Gough, V.P.Belavkin, and O.G.Smolyanov,
Hamilton-Jacobi-Bellman equations for
quantum optimal feedback controlб
\emph{J. Opt. B: Quant. Semiclass. Opt.}, vol.  7,  2005, pp. S237--S244.
\bibitem{GJ_2009}
J.Gough, and M.R.James,
Quantum feedback networks: Hamiltonian
formulation,
\emph{Commun. Math. Phys.},  vol. 287, 2009, pp. 1109--1132.

\bibitem{H_1991}
A.S.Holevo, Quantum stochastic calculus,
\textit{J. Math. Sci.}, vol. 56, no. 5, 1991, pp. 2609--2624.
\bibitem{H_2008}
N.J.Higham,
\emph{Functions of Matrices}, SIAM, Philadelphia, 2008.


\bibitem{HP_1984}
R.L.Hudson,  and K.R.Parthasarathy,
Quantum Ito's Formula and Stochastic Evolutions,
\emph{Commun. Math. Phys.}, vol.  93, 1984, pp. 301--323.
\bibitem{JK_1998}
K.Jacobs, and P.L.Knight,
Linear quantum trajectories: applications to continuous projection measurements,
\textit{Phys. Rev. A}, vol.  57, no. 4, 1998, pp. 2301--2310.

\bibitem{J_1997}
S.Janson,
\textit{Gaussian Hilbert Spaces},
Cambridge University Press, Cambridge, 1997.



\bibitem{JNP_2008}
M.R.James, H.I.Nurdin, and I.R.Petersen,
$H^{\infty}$ control of
linear quantum stochastic systems,
\textit{IEEE Trans.
Automat. Contr.}, vol. 53, no. 8, 2008, pp. 1787--1803.
\bibitem{KS_1991}
I.Karatzas, and S.E.Shreve,
\emph{Brownian Motion and Stochastic Calculus}, 2nd Ed., Springer-Verlag, New York, 1991.

\bibitem{KS_1972}
H.Kwakernaak, and R.Sivan,
\textit{Linear Optimal Control Systems},
Wiley, New York, 1972.
\bibitem{LL_1980}
L.D.Landau, and E.M.Lifshitz, \emph{Statistical Physics}, 3rd Ed., Elsevier, Amsterdam, 1980.

\bibitem{L_1976}
G.Lindblad, On the generators of quantum dynamical semigroups,
\emph{Commun. Math. Phys.}, vol.  48, 1976, pp. 119--130.
%
\bibitem{M_1998}
E.Merzbacher,
\emph{Quantum Mechanics}, 3rd Ed.,
Wiley, New York, 1998.
\bibitem{MJ_2012}
Z.Miao, and M.R.James,
Quantum observer for linear quantum
stochastic systems, Proc. 51st IEEE Conf. Decision Control, Maui,
Hawaii, USA, December 10-13, 2012, pp. 1680--1684.


\bibitem{NJP_2009}
H.I.Nurdin, M.R.James, and I.R.Petersen,
Coherent quantum LQG
control,
\textit{Automatica}, vol.  45, 2009, pp. 1837--1846.
%

\bibitem{P_1992}
K.R.Parthasarathy,
\textit{An Introduction to Quantum Stochastic Calculus},
Birk\-h\"{a}user, Basel, 1992.
\bibitem{KRP_2010}
K.R.Parthasarathy,
What is a Gaussian state?
\textit{Commun. Stoch. Anal.}, vol. 4, no. 2, 2010, pp. 143--160.
\bibitem{PS_1972}
K.R.Parthasarathy, and K.Schmidt,
\textit{Positive Definite Kernels, Continuous Tensor Products, and Central Limit Theorems of Probability Theory},
Springer-Verlag, Berlin, 1972.

\bibitem{P_2010}
I.R.Petersen,
Quantum linear systems theory,
Proc. 19th Int. Symp. Math. Theor. Networks Syst., Budapest, Hungary, July 5--9, 2010, pp.  2173--2184.
%


\bibitem{SVP_2015}
A.K.Sichani, I.G.Vladimirov, and I.R.Petersen,
A gradient descent approach to optimal coherent quantum LQG controller design,
\emph{accepted at American Control Conference, 2015},
preprint: arXiv:1502.00274, 1 February 2015.

%

%
\bibitem{S_2008}
D.W.Stroock, \emph{Partial Differential Equations for Probabilists}, Cambridge University Press, Cambridge, 2008.



%
\bibitem{VP_2013a}
I.G.Vladimirov, and I.R.Petersen,
A quasi-separation principle and Newton-like scheme for coherent quantum LQG control,
\emph{Syst. Contr. Lett.}, vol. 62, no. 7, 2013, pp. 550--559.
\bibitem{VP_2013b}
I.G.Vladimirov, and I.R.Petersen,
Coherent quantum filtering for physically realizable linear quantum plants,
Proc. European Control Conference, IEEE, Zurich, Switzerland, 17-19 July 2013,  pp. 2717--2723.
%
\bibitem{V_2014b}
I.G.Vladimirov,
A quantum mechanical version of Price's theorem
for Gaussian states, Proc. Australian Control Conference, AUCC 2014, IEEE, Canberra, Australia, 17-18 November 2014, pp. 118--123.
\bibitem{V_2015b}
I.G.Vladimirov,
Weyl variations and local sufficiency of linear observers in the mean square optimal coherent quantum filtering problem, \emph{submitted},  preprint: arXiv:1506.07653 [quant-ph], 25 June 2015.


%
%
%
%
%
%
%
%
%
%
%
%
%
%
%
%
%
}
\end{thebibliography}
\end{document}